\newtheorem{theorem}{Theorem}
\newtheorem{lemma}{Lemma}
\newtheorem{definition}{Definition}
\newtheorem{proposition}{Proposition}
\newtheorem{corollary}{Corollary}
\newtheorem{example}{Example}
\newtheorem{remark}{Remark}
\begin{document}

\title{A Singleton Bound for Lattice Schemes}
\author{Srikanth~B.~Pai,
        B.~Sundar~Rajan,~\IEEEmembership{Fellow,~IEEE}
\thanks{The authors are with the Department
of Electrical and Communication Engineering, Indian Institute of Science, Bangalore-560012, India. E-mail: srikanthbpai@gmail.com, bsrajan@ece.iisc.ernet.in.}}

\maketitle
\IEEEpeerreviewmaketitle

\begin{abstract}
In this paper, we derive a Singleton bound for lattice schemes and obtain Singleton bounds known for binary codes and subspace codes as special cases. It is shown that the modular structure affects the strength of the Singleton bound. We also obtain a {\it new} upper bound on the code size for non-constant dimension codes. The plots of this bound along with plots of the code sizes of known non-constant dimension codes in the literature reveal that our bound is tight for certain parameters of the code.
\let\thefootnote\relax\footnotetext{\emph{Key words and phrases}: Binary codes, network codes, posets, modular lattices, metric spaces, Singleton bound.}
\let\thefootnote\relax\footnotetext{Parts of the content in this paper has been presented in IEEE International Symposium on Information Theory 2013, Istanbul.}
\end{abstract}

\section{Introduction}

In \cite{KoeKschi}, a model for error correction for random network coding is proposed. In the proposed model, subspaces of a vector space are transmitted and the network acts like a channel which corrupts the transmitted subspace and a different subspace can be possibly received. The codes constructed in this framework are called \emph{subspace codes}. In \cite{KoeKschi}, only codes where all subspaces have the same dimension are considered. Further, a Singleton bound, and sphere packing bounds are derived and a code construction is presented. In this paper, we generalize the results presented in \cite{KoeKschi}.

Let $q$ be a prime power, $n$ a natural number and $W$ a $n$-dimensional vector space over $\mathbb{F}_q$. The class of all subspaces of $W$, denoted by ${\cal P}(W)$ and called \emph{a projective space}, is the input alphabet and the output alphabet for the operator channel \cite{KoeKschi}. This channel is used to model errors and erasures arising in network communication while employing random network coding (RNC). A channel encoder maps incoming messages to subspaces. Subspaces are transmitted through the operator channel and received by the sink. The operator channel model for RNC is shown in Fig. \ref{fig_RNCop} where $V$ is a transmitted subspace and $U$ is a received subspace, $k=\dim(U \cap V)$ and $E$ is called the error subspace. We say that the operator channel introduces $\rho = \dim(V) - k$ erasures and $t = \dim(E)$ errors. A \emph{subspace code} is a subset of ${\cal P}(W)$. A metric, called the \emph{subspace distance} $d_S$, is defined on a projective space in \cite{KoeKschi}. Given two elements $A, B \in {\cal P}(W)$, \emph{subspace distance} is defined as 
\begin{equation} d_S(A,B):=\dim(A+B) - \dim(A \cap B). \end{equation}

\begin{figure}
 \centering
 \includegraphics[scale=0.5, trim=50 225 40 80, clip=true]{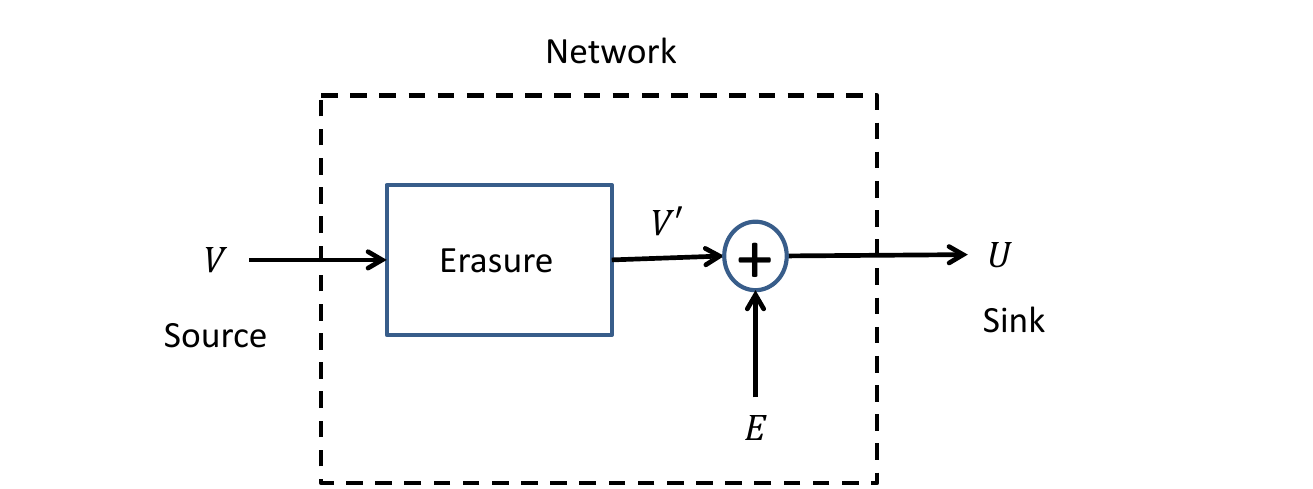}
 \caption{The operator channel for random network coding}
 \label{fig_RNCop}
\end{figure}

On the other hand, in binary coding, binary vectors are transmitted and Hamming distance serves as a metric on a binary vector space. It was noted in \cite{Braun} that the model for subspace codes is similar to the model of binary codes. The similarity is captured using lattices. The author of \cite{Braun} points out that the lattice of subsets can be used to represent binary codes and the lattice of subspaces can be used for subspace codes. Motivated by this observation, we define the concept of `lattice schemes' in this paper, which serves as an analogue for both binary codes and subspace codes. A Singleton bound for constant height codes in a modular lattice was derived in \cite{KendSch}. The technique of `puncturing a code' used to prove Singleton bound in that paper is very similar to the one used in \cite{KoeKschi}. It is shown that every time one `punctures' a code, the minimum distance of the code can drop by at most two units. In this paper, we generalize the Singleton bound to non-constant height lattice schemes in a geometric modular lattice. It turns out that the Singleton bound derived for constant dimension codes in \cite{KoeKschi} is not tight \cite{EtzVar}. On the other hand, the classical Singleton bound for binary codes is tight at least in certain cases. Our main motivation of this paper was to investigate this difference in behavior. The binary codes are schemes in a distributive lattice and the subspace codes are schemes in a modular lattice. We show that the lack of distributivity in subspace codes causes the Singleton bound to weaken. 

Our contributions in this paper can be summarized as follows:
\begin{enumerate}
\item We modify the definition of puncturing given in \cite{KendSch}. Under the modified definition of puncturing, we show that the drop in minimum distance after puncturing a scheme in a geometric modular lattice can be at most two units. However, we show that the drop in minimum distance after puncturing a scheme in a geometric distributive lattice is at most one unit.
\item According to a theorem in lattice theory, any distributive lattice cannot have the lattice $M_3$, shown in Fig. \ref{fig_Mthree}, as its sub-lattice \cite[Chapter 2, Theorem 13]{Birkhoff}. We show that the minimum distance of a lattice scheme drops by two units after puncturing any non-distributive modular geometric lattice and this is due to the $M_3$ structure. 
\item We derive a Singleton bound for lattice schemes in a geometric modular lattice (called the `Lattice Singleton Bound'). Lattice schemes is not assumed to be of a constant height as in \cite{KendSch}. The classical Singleton bound is derived as a corollary.
\item Using the Lattice Singleton Bound, we derive a bound on the code size for non-constant dimension codes in a projective space. To the best of our knowledge, this upper bound on code size is the first upper bound for non-constant dimension codes in the literature.
\end{enumerate}

The paper is organized as follows: In Section \ref{sec_lat_prelims} we give an introduction to lattices. Section \ref{sec_lat_sch} defines a lattice scheme and shows that binary codes and subspace codes are equivalent to lattice schemes for some lattices. We modify the definition of puncturing in \cite{KendSch} in Section \ref{sec_main_res}. A Singleton bound for geometric modular lattice is then derived and its applications are presented in the same section.

\begin{figure}
 \centering
  \includegraphics[scale=0.5, trim=90 150 75 90, clip=true]{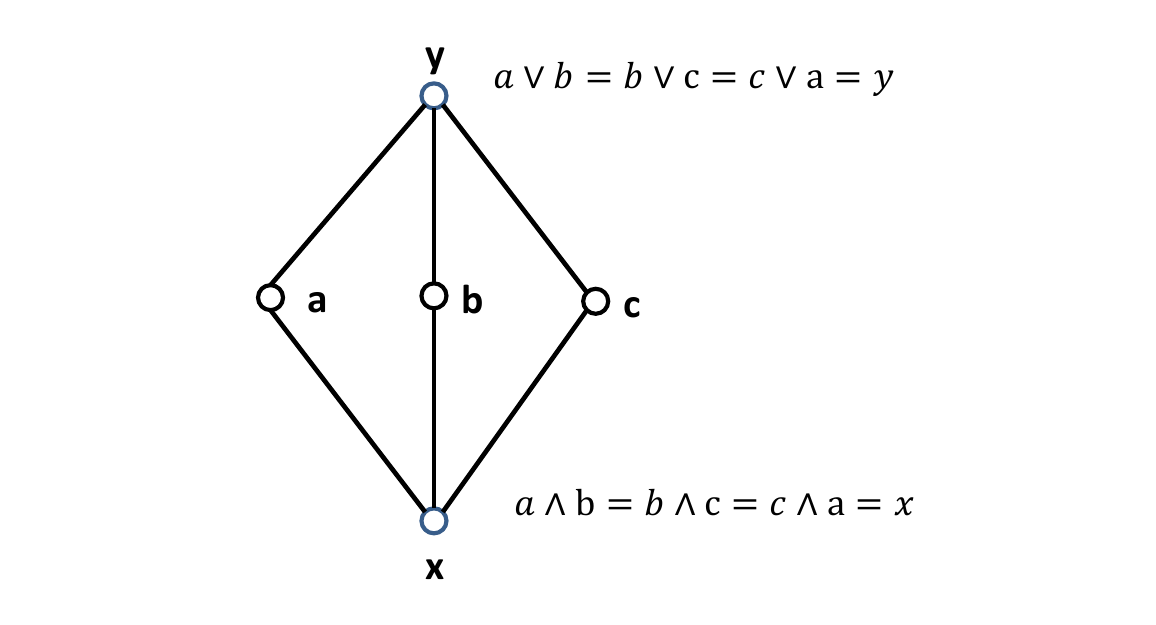}
  \caption{The $M_3$ lattice}
 \label{fig_Mthree}
\end{figure}

{\it Notations:} A set is denoted by a capital letter and its elements will be denoted by small letters (For example, $ x \in X $). All the sets considered in this paper will be finite. Given a set $X$, $|X|$ denotes the number of elements in the set and for a subset $A$ of $X$, $A^c$ denotes the complement of the set $A$ in $X$. For two sets $A$ and $B$, $A \times B$ denotes the Cartesian product of the two sets, i.e. $A \times B := \{(a,b) | a \in A, b \in B\}$. $A \triangle B$ represents the symmetric difference of sets, i.e. $A \triangle B := (A^c \cap B) \cup (A \cap B^c)$. A lattice will be denoted by $(L,\vee,\wedge)$ and sometimes we will drop the join and the meet notation, simply calling it $L$. $\mathbb{F}_q$ denotes the finite field with $q$ elements where $q$ is a power of a prime number. $\mathbb{F}_q^*$ denotes all the non zero elements of $\mathbb{F}_q$. The symbol $V$ denotes a vector space (generally over $\mathbb{F}_q$). For a subset $S$ of $V$, $\langle S \rangle$ denotes the linear span of all the elements in $S$. Given two subspaces $A$ and $B$, $A+B$ denotes the smallest subspace containing both $A$ and $B$. Let $V$ be a $n$ dimensional space over $\mathbb{F}_q$. Then the symbol ${\cal G}(n,l)$ denotes the Grassmanian, i.e. the set of all $l$ dimensional subspaces of $V$. The number of elements in ${\cal G}(n,l)$ is denoted as ${n \brack l}_q$. $\mathbb{F}_q^n$ denotes the $n$ dimensional vector space of $n$-tuples over $\mathbb{F}_q$. Given a vector $x \in \mathbb{F}_q^n$, $x_i$ denotes the $i$-th co-ordinate of $x$. The support of $x \in \mathbb{F}_q^n$ (denoted by $\text{support}\{x\}$) is defined as the set of indices where the vector is non-zero. $(\mathbb{F}_q^n,d_H)$ denotes the vector space $\mathbb{F}_q^n$ with the Hamming metric, i.e. $d_H(a,b) = |\text{support}\{a-b\}|$.

\section{Introduction to Lattices}
\label{sec_lat_prelims}

This section serves as a quick introduction to lattice theory. All
the lattice theory definitions and theorems required for the rest
of the paper are given in this section. We follow notations and definitions from \cite{Birkhoff}.

\begin{definition} \label{def_Poset_order} A \textit{poset} is a
pair $(P,\leq)$, where $P$ is a set and $\leq$ is a binary relation (called the \textit{order relation}) on the set $P$ satisfying: 
\begin{enumerate}
\item {\small (Reflexivity)} For all ${\displaystyle x,x\leq x}.$
\item {\small (Antisymmetry)} If ${\displaystyle x\leq y,y\leq x}$, then
$x=y.$
\item {\small (Transitivity)} If ${\displaystyle x\leq y,y\leq z}$, then
$x\leq z.$ 
\end{enumerate}
\end{definition}

For the remainder of the paper, $P$ denotes a poset with $\leq$
as the order relation. If $x\leq y$ and $x\neq y$, then we use the
shorthand $x<y$. $x\leq y$ is read as {}``x is less than y'' or
{}``x is contained in y''. 

\begin{definition} An \textit{upper bound} (\textit{lower bound})
of a subset $X$ of $P$ is an element $a\in P$ containing (contained
in) every $x\in X$. The \textit{least upper bound} (\textit{greatest
lower bound}) of $X$ is the element of $P$ contained in (containing)
every upper bound (lower bound) of $X$. \end{definition}

If a least upper bound, or a greatest lower bound of a set exists,
it is unique due to the antisymmetry property of the order relation (Definition \ref{def_Poset_order}). The least upper bound of a set
$X$ is denoted by sup $X$ and the greatest lower bound is denoted
by inf $X$.

\begin{definition} A \textit{lattice} $L$ is a poset which has the
property that ${\displaystyle \forall a,b\in L}$, the sup$\{a,b\}$
exists and inf$\{a,b\}$ exists. The sup$\{a,b\}$ is denoted by $a\vee b$ (read
as {}``a \textit{join} b'') and the inf$\{a,b\}$ is denoted by
$a\wedge b$ (read as {}``a \textit{meet} b''). The lattice itself
is denoted by $(L,\vee,\wedge)$. \end{definition}

We will assume, for the purposes of the paper, that all the lattices are finite and
have a unique greatest element denoted by $I$, and a unique least
element denoted by $O$.

\begin{definition} A \textit{sub-lattice} of a lattice $L$ is a subset
$K$ of $L$ that satisfies the following condition:
\[
a,b\in K\implies a\vee b\in K, a\wedge b\in K
\]
\end{definition}

\begin{definition} A map $\psi$ from $L$ to $K$ is said to be
a \textit{lattice homomorphism} if it satisfies the following conditions: 
\begin{enumerate}
\item $\psi(a\vee b)=\psi(a)\vee\psi(b)$ 
\item $\psi(a\wedge b)=\psi(a)\wedge\psi(b)$ 
\end{enumerate}
Further, if $\psi$ is bijective, then we say that $L$ and $K$ are
\textit{isomorphic}. \end{definition}

\begin{example} Let $X=\{1,2,3\}$ and ${\cal P}(X)$ denote the
power set of $X$. We can view $({\cal P}(X),\subseteq)$ as a poset
where set inclusion is the order relation. The power set
under this order is a lattice. For any subsets $A$ and $B$,
$A\vee B=A\cup B$ and $A\wedge B=A\cap B$. Further, the lattice generated
by subsets of $\{1,2\}$ is a sub-lattice. Notice that $I=X$ and $O=\phi$. 
\end{example}

One can completely specify an order of a poset (finite ones) by a
\textit{Hasse diagram}, like the one shown in Fig. \ref{fig_Power_set}.
If $a\leq b$ and there is no $t$ such that $a\leq t\leq b$, we
say that {}``$b$ \textit{covers} $a$''. In the Hasse diagram
of a lattice, $a$ and $b$ are joined iff $b$ covers $a$ or $a$
covers $b$. The diagram is drawn in such a way that if $b$ covers
$a$, then $b$ is written above $a$. Clearly, $a\leq b$ iff there
exists a path from $a$ moving up to $b$. The order relation is completely
specified by such a diagram. Naturally, $I$ will be the topmost element
and $O$ will be the lowest element.

\begin{figure}
 \centering
 \includegraphics[scale=0.4, trim=20 10 50 10, clip=true]{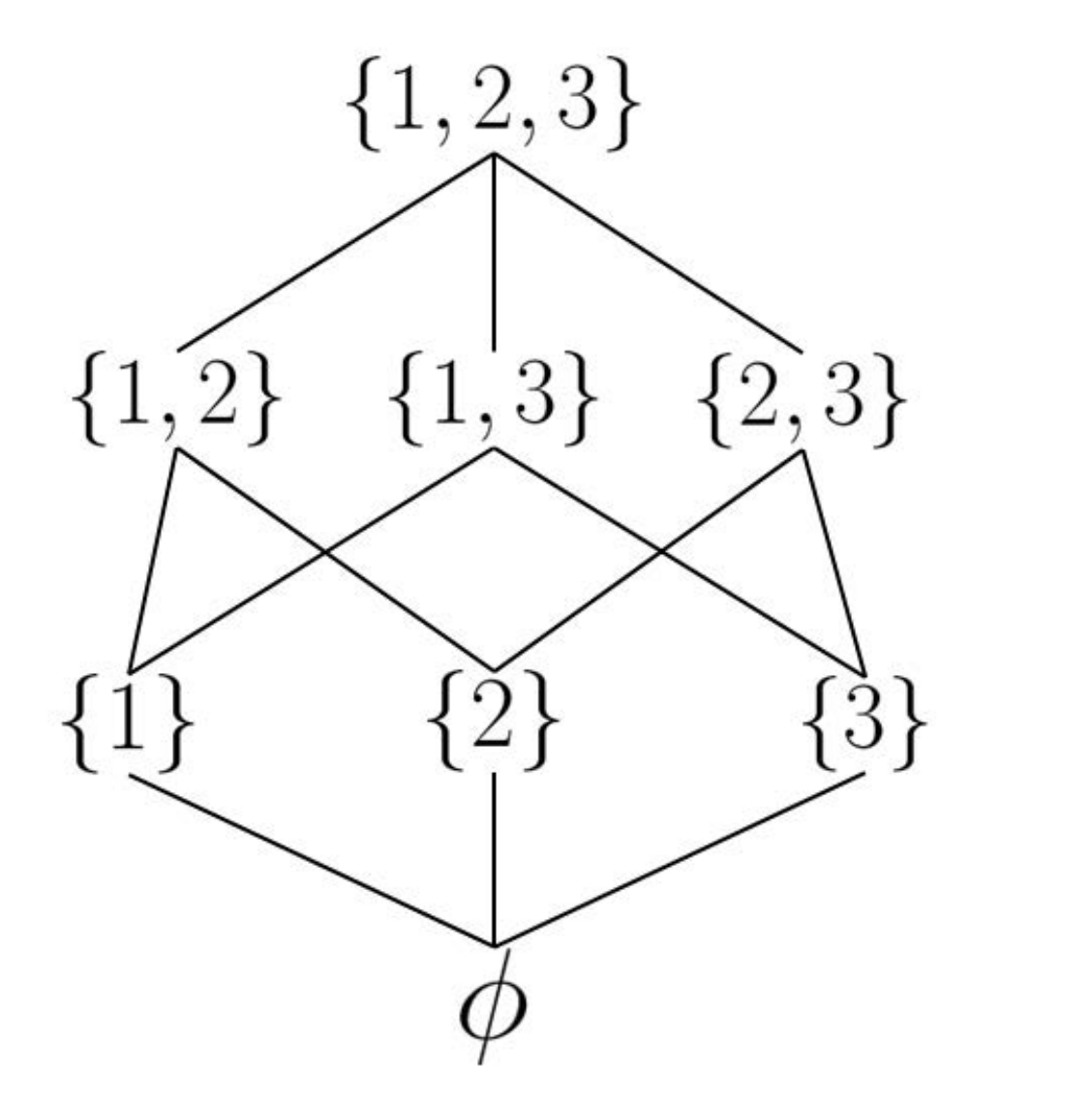}
 \caption{{\it Pow}($\{1,2,3\}$): A lattice of subsets}
 \label{fig_Power_set}
\end{figure}

For a set $X$, $($\textit{Pow($X$)}$,\cup,\cap)$ will denote lattice
of the power set of $X$ with union and intersection as join and meet
of the lattice respectively. 

\begin{definition} A lattice $(L,\vee,\wedge)$ is \textit{distributive}
if the following conditions hold: 
\begin{enumerate}
\item For all $a,b,c\in L,a\vee(b\wedge c)=(a\vee b)\wedge(a\vee c).$ 
\item For all $a,b,c\in L,a\wedge(b\vee c)=(a\wedge b)\vee(a\wedge c).$ 
\end{enumerate}
\end{definition}

It is immediate that for any set $X$, $($\textit{Pow}($X$)$,\cup,\cap)$
is a distributive lattice. However all lattices are not distributive,
as seen in the examples below.

\begin{example} Let $V$ be a vector space of dimension $n$ over
a field $\mathbb{F}_{q}$. The class of all the subspaces of $V$,
denoted by $\text{Sub}(V)$, can be ordered under inclusion in a manner
similar to a set. The join of two subspaces $A$ and $B$ will then
be the smallest subspace containing both $A$ and $B$. This means
that $A\vee B=A+B$ and the largest subspace contained in both $A$
and $B$ is $A\cap B$. So the meet of $A$ and $B$ is $A\cap B$.
Such a lattice will be denoted as $(\text{Sub}(V),+,\cap)$. Clearly
$I=V$ and $O=\{0_{V}\}$. This lattice will be called the \textit{projective lattice}.

This lattice is not distributive. To see this, we consider the vector
space $V=\mathbb{F}_{2}^{2}$ over $\mathbb{F}_{2}$.

The Hasse diagram of $(\text{Sub}(\mathbb{F}_{2}^{2}),+,\cap)$ is
shown in Fig. \ref{fig_Subspace}. Here, $A=\langle\{(0,1)\}\rangle$,
$B=\langle\{(1,0)\}\rangle$ and $C=\langle\{(1,1)\}\rangle$. Clearly, 
\[
A = A\cap(B+C)\neq A\cap B+A\cap C = \langle\{(0,0)\}\rangle
\]
 and thus $(\text{Sub}(\mathbb{F}_{2}^{2}),+,\cap)$ is not distributive.
$(\text{Sub}(\mathbb{F}_{2}^{2}),+,\cap)$ is identified by the name
\textit{M$_{3}$}.  \end{example}

\begin{figure}
 \centering
 \includegraphics[scale=0.4, trim=80 80 65 80, clip=true]{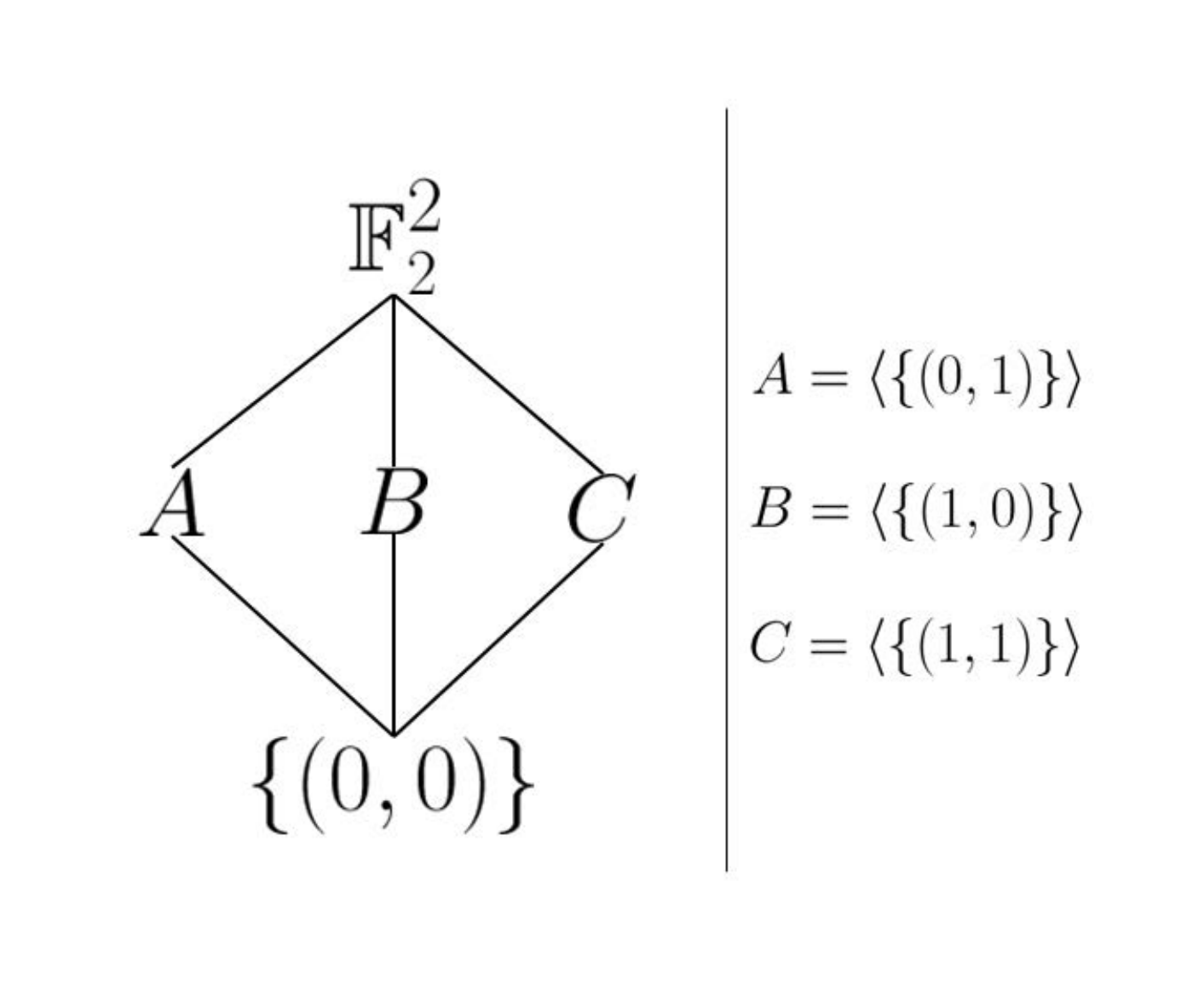}
 \caption{$\text{Sub}(\mathbb{F}_2^2)$: A lattice of subspaces, also called $M_3$}
 \label{fig_Subspace}
\end{figure}

The above lattice is not distributive, i.e. $A\cap(B+C)=A\cap B+A\cap C$
fails for three distinct subspaces $A,B,C$. If $A\subseteq C$, one
can prove that $A+(B\cap C)=(A+B)\cap C$ and thus it behaves partly
distributively. This property is captured in the following definition:

\begin{definition} A lattice $(L,\vee,\wedge)$ is \textit{modular}
if the following condition holds:\\
 (Modularity) If $a\leq c$, then $a\vee(b\wedge c)=(a\vee b)\wedge c.$
\end{definition}

If the lattice is distributive, the modularity condition holds. This
means that \textit{any distributive lattice is modular}. However $(\text{Sub}(V),+,\cap)$
is a modular lattice that is non distributive. It turns out that the $M_3$ lattice characterizes modular lattices.

\begin{theorem}\cite[pg.39, Th. 13]{Birkhoff}
\label{thm_mthree}
Any modular nondistributive lattice contains a sub-lattice isomorphic to $M_3$.
\end{theorem}

The observation that for $($\textit{Pow($X$)}$,\cup,\cap)$, any
two elements $A,B$ satisfy $|A\cup B|+|A\cap B|=|A|+|B|$, and in
an analogous fashion, for $(\text{Sub}(V),+,\cap)$, any two elements
$A,B$ satisfy $\text{dim}(A + B)+\text{dim}(A\cap B)=\text{dim}(A)+\text{dim}(B)$,
seems to suggest that modular lattices must satisfy an equation of
the form $v(a\vee b)+v(a\wedge b)=v(a)+v(b)$ for some real-valued
function $v$ on the lattice. This is indeed true and in turn, such
a function helps characterize modular lattices. We need the following
few definitions and results to make the characterization precise.

\begin{definition} An \textit{isotone valuation} on a lattice $L$
is a real valued function $v$ on $L$ that satisfies: 
\begin{enumerate}
\item {\small (Valuation)} For all $x,y\in L$, \[{\displaystyle v(x\vee y)+v(x\wedge y)=v(x)+v(y)}.\]
\item {\small (Isotone)} $x\leq y\implies v(x)\leq v(y)$ 
\end{enumerate}
\end{definition}

Additionally, the isotone valuation is called \textit{positive}, if
$x<y\implies v(x)<v(y)$. $d_{v}:=v(a\vee b)-v(a\wedge b)$ is said
to be the distance induced by $v$.

\begin{theorem}\cite[pg.230, Th.1]{Birkhoff} \label{thm_pseudo_metric} Given a lattice $L$ and
an isotone valuation $v$, the function $d_{v}(a,b):=v(a\vee b)-v(a\wedge b)$
is a metric iff $v$ is positive. In general, $d_{v}$ is a pseudo
metric. \end{theorem}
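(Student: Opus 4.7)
The plan is to verify the four axioms of a (pseudo) metric one by one, and then handle the ``metric iff positive'' equivalence at the end. Symmetry $d_v(a,b)=d_v(b,a)$ is immediate from commutativity of $\vee$ and $\wedge$. Non-negativity follows because $a\wedge b\leq a\vee b$, so the isotone property of $v$ gives $v(a\wedge b)\leq v(a\vee b)$; and $d_v(a,a)=v(a)-v(a)=0$. This already shows $d_v$ is at least a pseudo metric once the triangle inequality is established, which is the main obstacle.

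To prove the triangle inequality $d_v(a,c)\leq d_v(a,b)+d_v(b,c)$, I would first pass from $a,c$ to the three-element ``envelopes'' $a\vee b\vee c$ and $a\wedge b\wedge c$. The key move is to apply the valuation identity to the pair $(a\vee b,\,b\vee c)$: since $(a\vee b)\vee(b\vee c)=a\vee b\vee c$ and $(a\vee b)\wedge(b\vee c)\geq b$, the valuation property combined with isotonicity gives
\[
v(a\vee b\vee c)\;\leq\;v(a\vee b)+v(b\vee c)-v(b).
\]
A dual computation, applied to $(a\wedge b,\,b\wedge c)$ whose join is $\leq b$ and whose meet is $a\wedge b\wedge c$, yields
\[
v(a\wedge b\wedge c)\;\geq\;v(a\wedge b)+v(b\wedge c)-v(b).
\]
Finally, isotonicity again provides $v(a\vee c)\leq v(a\vee b\vee c)$ and $v(a\wedge c)\geq v(a\wedge b\wedge c)$. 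Subtracting the second bound from the first, the $v(b)$ terms cancel, and one is left with exactly $d_v(a,b)+d_v(b,c)$ on the right and $d_v(a,c)$ on the left.

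For the final equivalence, I would argue both directions. If $v$ is positive and $d_v(a,b)=0$, then $v(a\wedge b)=v(a\vee b)$; combined with $a\wedge b\leq a\vee b$, positivity forces $a\wedge b=a\vee b$, which in turn forces $a=b$ since each of $a,b$ lies between the meet and the join. Conversely, if $v$ is not positive, there exist $x<y$ with $v(x)=v(y)$, giving $d_v(x,y)=0$ with $x\neq y$, so $d_v$ fails the identity-of-indiscernibles axiom and is only a pseudo metric.

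The main obstacle is the triangle inequality, and specifically spotting that the natural inequalities to combine come from applying the valuation identity to the \emph{pairs} $(a\vee b,\,b\vee c)$ and $(a\wedge b,\,b\wedge c)$ rather than directly to $(a,b,c)$; the rest is bookkeeping using isotonicity to replace three-fold joins/meets by the two-fold joins/meets appearing in $d_v(a,c)$.
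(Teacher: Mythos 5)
Your proof is correct: the triangle inequality argument via the valuation identity applied to the pairs $(a\vee b,\,b\vee c)$ and $(a\wedge b,\,b\wedge c)$, combined with isotonicity and the observations $(a\vee b)\wedge(b\vee c)\geq b$ and $(a\wedge b)\vee(b\wedge c)\leq b$, is exactly the classical argument, and your treatment of the positivity equivalence (including using isotonicity to upgrade ``not positive'' to the existence of $x<y$ with $v(x)=v(y)$) is sound. The paper itself gives no proof of this statement --- it cites it directly from Birkhoff --- and your argument is essentially the standard proof found there, so there is nothing to reconcile.
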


In a lattice $(L,\vee,\wedge)$, a \textit{chain} $C$ is a subset
of $L$ with the property that for all $a,b\in C$, $a\leq b$ or
$b\leq a$. We say that in a chain, any two elements are comparable.
Given two elements $a,b\in L$, a chain $\{x_{1},x_{2},\cdots,x_{l}\}$
of $L$ with the property $a=x_{0}<x_{1}<\cdots<x_{l}=b$ is called
a \textit{chain between $a$ and $b$}. The \textit{length} of the
chain is defined as $l$.

\begin{definition} The \textit{height} of an element $x$ in a lattice $L$
is the maximum length of a chain between $O$
and $x$. It is denoted by $h_{L}(x)$. \end{definition}

The number $h_{L}(I)$ is called \textit{the height of the lattice}
$L$ or the \textit{dimension of} $L$. Note that the chain from $O$ to $O$ contains only one element, and thus $h_L(O) = 0$.
We need an additional property to characterize modular lattices.

\begin{definition} A lattice is said to have the \textit{Jordan-Dedekind
property} if all maximal chains between two elements have the same
finite length. \end{definition}
All lattices need not have the Jordan-Dedekind property. For example, in the $N_5$ lattice, shown in Fig.\ref{fig_Hasse_diag_N_5}, there are two maximal chains from $d$ to $u$. The maximal chain $d \to a \to b \to u$ has three units of length and the other maximal chain $d \to c \to u$ has a length of two units. Thus $N_5$ does not satisfy the Jordan Dedekind property. It turns out that modularity
is closely related to this property.

\begin{figure}
 \centering
 \includegraphics[scale=0.4, trim=10 80 10 10, clip=true]{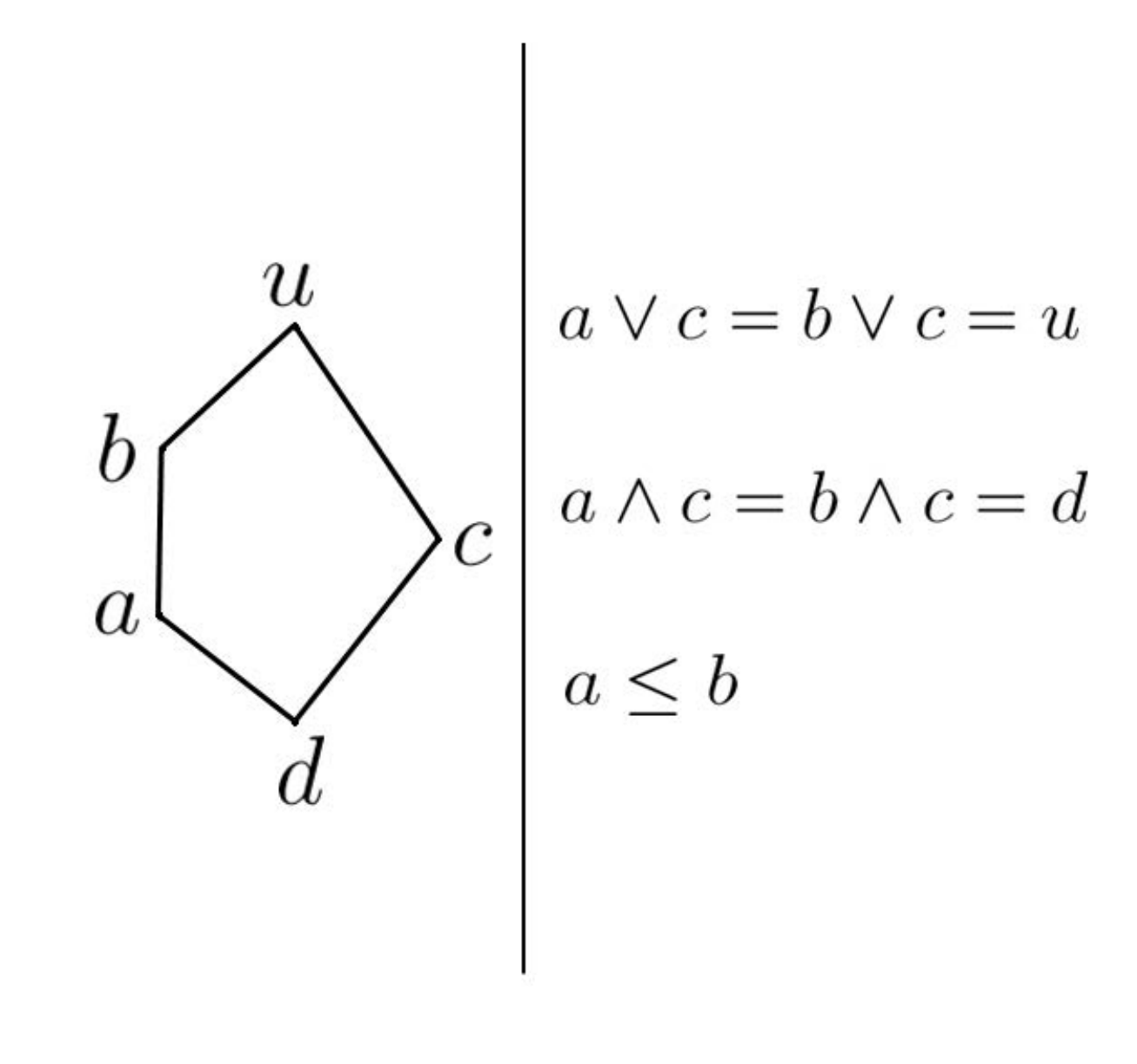}
 \caption{The $N_5$ lattice: A lattice without Jordan-Dedekind property}
 \label{fig_Hasse_diag_N_5}
\end{figure}

\begin{theorem} \label{JD} \cite[pg.41, Th.16]{Birkhoff} Let $(L,\vee,\wedge)$ be a lattice of finite length
with the height function $h$, then the following conditions are equivalent: 
\begin{enumerate}
\item $L$ is a modular lattice. 
\item $L$ has the Jordan-Dedekind property and $h$ is a valuation. 
\end{enumerate}
\end{theorem}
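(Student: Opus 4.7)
The plan is to prove both implications, with the forward direction $(1) \Rightarrow (2)$ resting on the \emph{diamond transposition} that is the structural heart of modular lattices, and the reverse direction being essentially a height computation.

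For $(1) \Rightarrow (2)$, I would first establish that in a modular lattice the map $\varphi \colon x \mapsto x \vee b$ is an order-isomorphism from the interval $[a \wedge b, \, a]$ onto $[b, \, a \vee b]$, with inverse $y \mapsto y \wedge a$. Both verifications amount to a single application of modularity: for $x \in [a \wedge b, a]$, modularity gives $(x \vee b) \wedge a = x \vee (b \wedge a) = x$, and symmetrically for the other direction. Since an order-isomorphism preserves covering relations, every maximal chain in $[a \wedge b, a]$ corresponds to a maximal chain of the same length in $[b, a \vee b]$.

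From this I would derive the Jordan--Dedekind property by induction on $h_L$. Given two maximal chains from $O$ to an element $x$, one compares the top covers $x_1$ and $x_1'$: if they coincide the claim follows by induction, otherwise $x_1 \wedge x_1'$ sits below both, and the diamond transposition applied to the pair $(x_1, x_1')$ forces $x_1$ to cover $x_1 \wedge x_1'$ and $x_1'$ to cover $x_1 \wedge x_1'$, letting one splice chains of equal length. Once Jordan--Dedekind is in hand, the diamond isomorphism immediately yields $h(a) - h(a \wedge b) = h(a \vee b) - h(b)$, i.e.\ the valuation identity; isotonicity of $h$ is automatic from the definition of height.

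For $(2) \Rightarrow (1)$, assume $h$ is a valuation and Jordan--Dedekind holds. In a finite lattice with Jordan--Dedekind, any strict inequality $x < y$ admits a maximal chain of positive length from $x$ to $y$, so $h$ is \emph{positive} in the sense preceding Theorem \ref{thm_pseudo_metric}. Now fix $a \leq c$, set $x = a \vee (b \wedge c)$ and $y = (a \vee b) \wedge c$, and note the easy containment $x \leq y$ (valid in any lattice when $a \leq c$). Applying the valuation identity to the pairs $(a,b)$, $(b,c)$, and $(a \vee b, c)$, and using $a \vee b \vee c = b \vee c$ and $a \wedge b \wedge c = a \wedge b$ (both consequences of $a \leq c$), a short algebraic reduction shows
\[
h(x) \;=\; h(a) + h(b \wedge c) - h(a \wedge b) \;=\; h(y).
\]
Positivity of $h$ then forces $x = y$, which is the modularity condition.

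The main obstacle is the forward direction: the diamond transposition itself is short, but bootstrapping it into Jordan--Dedekind requires a careful chain-matching induction, and one must resist the temptation to assume that height is already well-defined before establishing it. Once that technical step is clean, the valuation identity and the converse direction are essentially bookkeeping.
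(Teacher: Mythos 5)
The paper gives no proof of this theorem at all---it is imported verbatim from Birkhoff \cite[pg.41, Th.16]{Birkhoff}---so the only question is whether your argument stands on its own, and it does: it is precisely the standard textbook proof. The diamond transposition $x \mapsto x \vee b$ with inverse $y \mapsto y \wedge a$ between $[a\wedge b, a]$ and $[b, a\vee b]$, the cover-splicing induction yielding the Jordan--Dedekind property, the resulting valuation identity, and the converse via strict isotonicity of $h$ (which, as you could note, holds in any finite lattice even without Jordan--Dedekind, since a chain witnessing $h(x)$ extends past $x$ whenever $x<y$) are all correct. The only slip is cosmetic: the identity $h(x)=h(a)+h(b\wedge c)-h(a\wedge b)$ for $x=a\vee(b\wedge c)$ comes from applying the valuation law to the pair $(a,\,b\wedge c)$ together with $a\wedge b\wedge c=a\wedge b$, a pair you omit from your list of three, though the displayed computation and the conclusion $h(x)=h(y)$, hence $x=y$ by positivity, are exactly right.
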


The $N_5$ lattice shown in Fig. \ref{fig_Hasse_diag_N_5} does not satisfy the Jordan-Dedekind property and therefore, due to Theorem \ref{JD}, $N_5$ is non-modular.

The height function $h$ is a positive isotone evaluation for modular lattices.
Therefore from Theorem \ref{thm_pseudo_metric}, the distance function
induced by $h$ is a metric on the modular lattice. All the lattices,
considered in this paper, will be modular. We will use this observation
in the next section to define a coding metric for the definition of
lattice schemes. We, therefore, have the following proposition which follows from Theorem \ref{thm_pseudo_metric}.

\begin{theorem} \label{thm_height_metric} Let $L$ be a modular lattice of finite length with
the height function $h$, then $d_{h}$ is a metric on $L$.

\end{theorem}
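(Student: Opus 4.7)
The plan is to combine the two preceding theorems in a nearly mechanical way: Theorem \ref{JD} supplies the valuation property of the height function $h$ from the modularity hypothesis, and Theorem \ref{thm_pseudo_metric} then converts ``positive isotone valuation'' into ``metric''. So the task reduces to verifying that $h$ meets the three hypotheses of Theorem \ref{thm_pseudo_metric}: valuation, isotone, and positive.

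First I would note that $h$ is isotone essentially by definition. If $x \leq y$, then any maximal chain from $O$ to $x$ can be concatenated with any chain from $x$ to $y$ to produce a chain from $O$ to $y$ whose length is at least $h_L(x)$; taking the maximum over such extensions gives $h_L(y) \geq h_L(x)$. Next, Theorem \ref{JD} tells us that because $L$ is modular and of finite length, $h$ is a valuation, i.e.\ $h(x \vee y) + h(x \wedge y) = h(x) + h(y)$ for all $x,y \in L$. Together these two facts exhibit $h$ as an isotone valuation.

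The step that requires slightly more care is positivity: if $x < y$ then $h_L(x) < h_L(y)$. Here I would again exploit modularity through Theorem \ref{JD}, which ensures the Jordan-Dedekind property. Take a maximal chain $O = x_0 < x_1 < \cdots < x_k = x$ realizing $h_L(x) = k$, and extend it by at least one strict step to reach $y$, producing a chain from $O$ to $y$ of length at least $k+1$. Hence $h_L(y) \geq k+1 > h_L(x)$, so $h$ is positive.

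Having verified that $h$ is a positive isotone valuation on $L$, Theorem \ref{thm_pseudo_metric} applies and immediately gives that $d_h(a,b) = h(a \vee b) - h(a \wedge b)$ is a metric on $L$. There is no serious obstacle; the only subtle point is the positivity step, where one must be careful not to invoke positivity implicitly when arguing about chain lengths, but the short extension argument above handles it cleanly.
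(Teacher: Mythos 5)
Your proof is correct and follows essentially the same route as the paper: the paper simply asserts that $h$ is a positive isotone valuation on a modular lattice (with the valuation property coming from Theorem \ref{JD}) and then invokes Theorem \ref{thm_pseudo_metric}, which is exactly your argument with the isotone and positivity checks written out. One minor remark: your positivity step needs only the definition of height via maximal chain lengths (appending $y$ to a longest chain ending at $x$), so invoking the Jordan-Dedekind property there is unnecessary, though harmless.
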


Clearly $h[x]=1$ iff $x$ covers $O$. Such elements are called \textit{atoms} of the lattice. The atoms of the lattice are analogues of the singletons in lattice of sets (one dimensional spaces in the lattice of subspaces). 

\begin{definition} A \textit{geometric modular lattice} is a modular
lattice of finite height in which every element is a join of atoms.
Further, if the geometric modular lattice is distributive, then it
will be called \textit{geometric distributive}. \end{definition}

The lattice of subsets is a an example of a geometric distributive lattice and the lattice of subspaces is an example of a geometric modular lattice.

\begin{figure}
 \centering
 \includegraphics[scale=0.5, trim=95 90 10 30, clip=true]{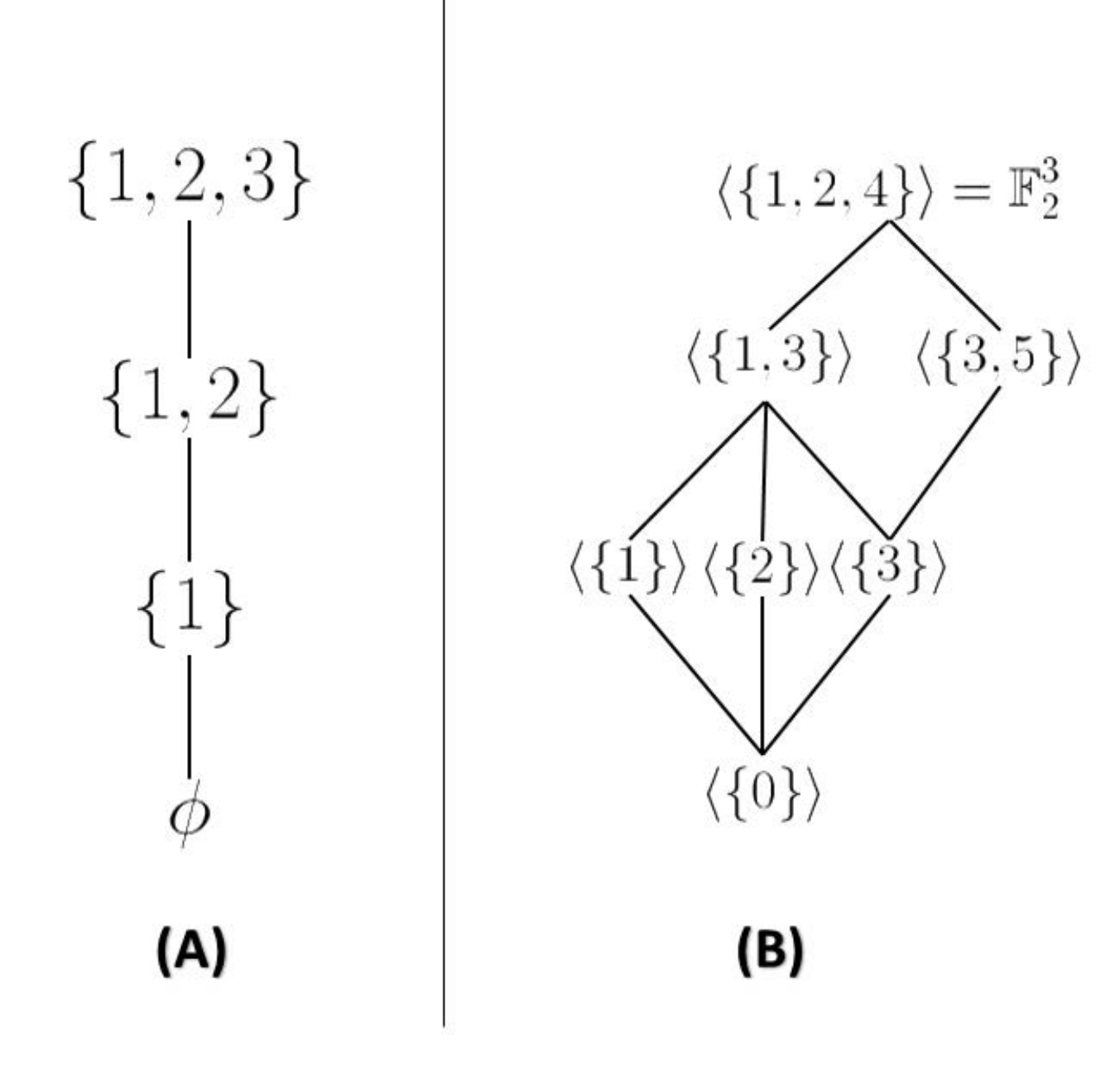}
 \caption{Two non-geometric lattices: In part (A), a non-geometric distributive lattice $L_1$ is shown. In part (B), a non-geometric, non-distributive modular lattice $L_2$ is depicted.}
 \label{fig_nongeometric}
\end{figure}

However, all modular (distributive) lattices are not geometric. For example, consider the sublattice (in fact, a chain) \[L_1 = \{\phi, \{1\}, \{1,2\}, \{1,2,3\}\}\] of $($\textit{Pow($\{1,2,3\}$)}$,\cup,\cap)$ (shown in part (A) of Fig. \ref{fig_nongeometric}). $L_1$ is not geometric because $\{1,2,3\}$ cannot be obtained as a join of atoms of $L_1$ (since there is only one atom in $L_1$). $L_1$ is distributive since it is a sublattice of distributive lattice viz. $($\textit{Pow($\{1,2,3\}$)}$,\cup,\cap)$. Therefore $L_1$ is an example of non-geometric distributive lattice.

In order to construct a non-geometric non-distributive modular lattice, we will consider a sublattice of $(\text{Sub}(\mathbb{F}_2^3),+,\cap)$. For the sake of presentation, we will represent a vector $(a,b,c)$ by the natural number $a + 2b + 4c$. For example, $(1,0,1)$ will be represented as $5$. Consider the sublattice 
\[L_2 = \{\langle\{0\}\rangle,\langle\{1\}\rangle,\langle\{2\}\rangle,\langle\{3\}\rangle, \langle\{1,3\}\rangle, \langle\{3,5\}\rangle,\mathbb{F}_3^2 \}\] of $(\text{Sub}(\mathbb{F}_2^3),+,\cap)$. $L_2$ is modular since it is a sublattice of a modular lattice viz. $(\text{Sub}(\mathbb{F}_2^3),+,\cap)$. $L_2$ is not distributive because it contains a copy of $M_3$ as a sublattice ($M_3$ is non-distributive). It can be verified that $\langle\{3,5\}\rangle$ cannot be obtained as a join of atoms of $L_2$ and thus $L_2$ is non-geometric. Therefore $L_2$ is an example of non-geometric non-distributive modular lattice.

\section{Lattice Schemes}

\label{sec_lat_sch}

In order to develop a lattice based framework for Singleton bounds, we need a definition of a code in this framework. In this section, we define `Lattice Schemes' which will serve as analogues of codes. This idea is motivated by the coding-like theory, introduced in \cite{Braun}. We will show that Hamming space, and the projective spaces are examples of lattice schemes. Henceforth, all the lattices are assumed to be geometric modular of finite height unless otherwise mentioned. 

A lattice scheme, which is analogous to a code, is defined as follows:
\begin{definition}
Let $L$ be a lattice and $d_h$ be the metric induced by the height function $h$ of the lattice $L$. A {\it lattice scheme} $C$ in $(L,d_h)$ is a subset of $L$ and the {\it minimum distance of $C$}, denoted by $d$ is defined as \[\displaystyle d:= \min_{a,b \in C, a \neq b} d_h(a,b).\] The dimension of a lattice scheme is defined as $n:=h(I)$.
\end{definition}

A coding space $(X,d_X)$ is a metric space where $X$ is a set and $d_X$ is a metric on $X$. A code $C$ in a coding space $(X,d_X)$ is a subset of $X$. The connection between lattice schemes and codes is made precise in the following definition.
\begin{definition}
\label{def_transform}
Let $C$ be a lattice scheme in $(L,d_h)$ and $\tilde{C}$ be a code in a coding space $(X,d_{X})$. We say that the {\it code} $\tilde{C}$ is {\it equivalent to a lattice scheme} $C$, if there exists a function $T:X \to L$, that satisfies the following conditions:

\begin{enumerate} \item $T(\tilde{C}) = C$ \item $d_h(T(a),T(b)) = d_X(a,b)$ for all $a, b \in \tilde{C}$ \end{enumerate} 

$T$ is called a transform for the code $\tilde{C}$.
\end{definition}

\begin{remark}
A transform $T$ of a code is one-one. To prove this fact, assume $T(a) = T(b)$, then $d_h(T(a),T(b)) = 0$. By 2) of Definition \ref{def_transform}, this would imply $d_X(a,b) = 0$. And since $d_X$ is a metric, we have $a=b$. 
\end{remark}

When a lattice scheme is equivalent to a code, we also say the code is equivalent to the scheme. A transform for a code preserves the distance between any pair of codewords. Therefore whenever a lattice scheme is equivalent to a code, the lattice scheme will have the same minimum distance as the code. The following proposition follows from Definition \ref{def_transform},

\begin{proposition} 
\label{prop_dist_preserve}

Let $C$ be a lattice scheme with minimum distance $d$ that is equivalent to code $\tilde{C}$ with minimum distance $\tilde{d}$, then

\begin{enumerate} \item $d=\tilde{d}.$ \item if $\tilde{A} \subseteq \tilde{C}$, then there exists $A \subseteq C$ such that the lattice scheme $A$ is equivalent to the code $\tilde{A}.$ \end{enumerate} \end{proposition}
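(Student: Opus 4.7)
The plan is to unpack the definition of a transform and observe that both parts reduce to straightforward set-theoretic bookkeeping once we use the bijectivity of $T$ (noted in the preceding remark) and its distance-preservation property.

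For part (1), I would start from the defining formula $\tilde{d} = \min_{a,b \in \tilde{C}, a \neq b} d_X(a,b)$. Using condition (2) of Definition \ref{def_transform}, each term $d_X(a,b)$ rewrites as $d_h(T(a),T(b))$, so $\tilde{d} = \min_{a,b \in \tilde{C}, a \neq b} d_h(T(a),T(b))$. Because $T$ is injective (by the remark), the pairs $\{T(a),T(b)\}$ with $a \neq b$ range over exactly the distinct pairs of $T(\tilde{C}) = C$. Hence the minimum equals $\min_{x,y \in C, x \neq y} d_h(x,y) = d$, giving $d = \tilde{d}$.

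For part (2), given $\tilde{A} \subseteq \tilde{C}$, the natural candidate is $A := T(\tilde{A})$. Since $T(\tilde{C}) = C$, we have $A \subseteq C$, so $A$ is a lattice scheme in $(L,d_h)$. To exhibit the equivalence, I would take $T|_{\tilde{A}}$ as the transform. Condition (1) of Definition \ref{def_transform} holds by construction, and condition (2) is inherited from $T$ since it holds for all pairs in $\tilde{C}$, hence in particular for all pairs in $\tilde{A}$.

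There is no real obstacle here; the proposition is essentially the statement that a distance-preserving bijection transports minimum-distance information and restricts to subsets. The only point requiring a little care is the use of injectivity in part (1) to ensure that the minimum over pairs in $\tilde{C}$ matches the minimum over pairs in $C$ (otherwise collapsed pairs could spuriously contribute a zero). Once injectivity is in hand via the remark, both parts follow directly.
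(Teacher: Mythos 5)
Your proposal is correct and follows essentially the same route as the paper: part (2) is word-for-word the paper's argument (take $A := T(\tilde{A})$ and reuse $T$ as the transform), and part (1) is the same "follows from the definition" claim, which you merely spell out in more detail by invoking the injectivity of $T$ from the remark to match distinct pairs of $\tilde{C}$ with distinct pairs of $C$.
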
 
\begin{proof} Since `$C$ is equivalent to $\tilde{C}$', the first part follows from the definition. For the second part, since $C$ is equivalent to $\tilde{C}$, there exists a transform $T$. We define $A := T(\tilde{A})$. Clearly $A \subseteq C$ and the function $T$ still serves as a lattice transform for the code $\tilde{A}$. Thus $A$ is equivalent to $\tilde{A}$. \end{proof}

Due to the above proposition, given a lattice scheme equivalent to a code, we can talk of the minimum distance without specifying whether it is the minimum distance of the code or the minimum distance of the scheme. From the second part of the Proposition \ref{prop_dist_preserve}, we can infer that the subsets of codes are equivalent to certain subsets of schemes. Therefore, if we prove that a coding space itself is equivalent to some lattice, then any code in the coding space is equivalent to a scheme in the lattice. We use this observation to establish that every binary code is equivalent to a scheme in the power set lattice.

\begin{example} \label{exmpl_binary_codes}

Let $X = \{1,2,3,...,n\}$, $L = (${\it Pow($X$)}$,\cup,\cap)$ and $h(A) = |A|$. $L$ is a geometric distributive lattice and $d_h(A,B) = |A \cup B| - |A \cap B| = |A \triangle B|$ as seen in the previous section. Consider codes in the coding space $(\mathbb{F}_2^n,d_H)$ where $d_H$ is the Hamming distance between two vectors. 

We claim that the entire coding space $\mathbb{F}_2^n$ is equivalent to the power set lattice $L$. To see this let, \begin{align*} \phi: & \mathbb{F}_2 ^n \to \textit{Pow}(X) \\ & x \longmapsto \text{support}(x). \end{align*}
It can be verified that $\phi (x + y) = \phi(x) \triangle \phi(y)$ (where $\triangle$ represents the symmetric difference of sets) and that $\phi$ is onto. Further, $d_h(\phi(a),\phi(b))=| \phi(a) \triangle \phi(b)| = | \phi(a+b)|$. The number of elements in $\phi(a+b)$ will be the Hamming weight of $a+b$. Thus $| \phi(a+b)| = d_H(a+b,0) = d_H(a,b)$. Therefore, $\phi$ is the power set lattice transform for the binary code. 

Since the map is onto, $\phi$ is a bijective map. By application of the second part of Proposition \ref{prop_dist_preserve}, we see that every binary code is equivalent to a power set lattice scheme. 
 \end{example}

The lattice of subspaces, discussed in the previous section, also provide examples of lattice schemes. In a projective lattice, the metric induced by the height function is the subspace distance. Therefore, any subspace code is equivalent to a lattice scheme in the projective lattice.

\begin{example} Let $V$ be a vector space over $\mathbb{F}_q, L = ({\cal P}(V),+,\cap)$ is a projective lattice with height function $h(A) = \text{dim}(A)$ . The coding space is $(\text{Sub}(V),d_S)$ where $d_S$ represents the subspace distance defined in \cite{KoeKschi}. For any two subspaces $A$ and $B$, $d_S(A,B):= \dim(A+B) - \dim(A \cap B)$. Since $d_h(A,B) = h(A \vee B) - h(A \wedge B) = \text{dim}(A+ B) - \text{dim}(A \cap B)$, the metric induced by the height function is the subspace distance. The identity map can be a transform in this case. And thus, subspace schemes are equivalent to subspace codes. \end{example}

\section{Singleton Bound}
\label{sec_main_res}
We derive the Singleton bound for geometric modular lattices in this section. We use the notion of puncturing a scheme from \cite{KendSch} and investigate the effects of puncturing on the minimum distance of a lattice scheme. It will be proved that, after puncturing a scheme in a geometric modular lattice, the maximum drop in minimum distance will be two. However, if the lattice is known to be distributive, it is shown that the maximum drop in minimum distance is one. This observation will be applied repeatedly until the minimum distance drops to zero so that a bound can be derived on the cardinality of the scheme.

We will need the following definition of {\it Whitney number of the second kind}, from \cite{Stanley}, to state the lattice Singleton bound: \begin{definition} The {\it Whitney numbers} $W_k(L)$ of a lattice $L$ in a lattice with height $h$ is defined as \[ W_k(L) = \left|\{a \in L | h(a) = k\}\right|. \] \end{definition}

The Whitney numbers of a lattice count the total number of elements in the lattice of a given height.

\begin{definition} A scheme $C$ in $L$ is said to be {\it punctured to $C'$} if $C' = \{w \wedge a | a \in C \}$ for some $w \in L$. If $w$ has a height of $h(I)-1$, the scheme $C$ is said to be {\it punctured by a dimension}. \end{definition}


We need the following lemma (called the `distance drop lemma') to establish the proof of the main theorem later.

\begin{lemma}[Distance drop lemma] \label{lem_drop_dist}

Let $C$ be a scheme in $(L,d_h)$ with minimum distance $d$, and let $C$ be punctured by a dimension to $C'$, then

\begin{enumerate}

\item $L$ is distributive $\implies$ $d_{\min}(C') \geq d - 1$

\item In general, $ d_{\min}(C') \geq d - 2.$

\end{enumerate} \end{lemma}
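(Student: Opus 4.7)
The plan is to extract a single structural observation about $w$ and then deduce both bounds from it. Since $L$ is geometric modular, Theorem \ref{JD} guarantees that $h$ is a valuation, and the hypothesis $h(w) = h(I)-1$ forces $I$ to cover $w$; hence $x \vee w \in \{w, I\}$ for every $x \in L$. Applying the valuation identity $h(x \vee w) + h(x \wedge w) = h(x) + h(w)$ then yields $h(x \wedge w) \in \{h(x)-1, h(x)\}$, or equivalently $d_h(x, x \wedge w) \leq 1$. This one-step height-drop estimate is the workhorse.

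For part (2), I would invoke the triangle inequality, which is available because $d_h$ is a metric on $L$ by Theorem \ref{thm_height_metric}. Given distinct punctured elements $a\wedge w, b\wedge w \in C'$, their preimages $a,b\in C$ are forced to be distinct, and
\begin{equation*}
d_h(a,b) \;\leq\; d_h(a, a\wedge w) + d_h(a\wedge w, b\wedge w) + d_h(b\wedge w, b) \;\leq\; 2 + d_h(a\wedge w, b\wedge w),
\end{equation*}
so $d_h(a\wedge w, b\wedge w) \geq d_h(a,b) - 2 \geq d-2$; taking the minimum over distinct pairs in $C'$ then gives $d_{\min}(C') \geq d-2$.

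For part (1), I would use distributivity to evaluate the punctured distance explicitly: $(a\wedge w)\vee(b\wedge w) = (a\vee b)\wedge w$ (distributivity of $\wedge$ over $\vee$) and $(a\wedge w)\wedge(b\wedge w) = (a\wedge b)\wedge w$ (associativity). Hence the drop in distance equals $[h(a\vee b) - h((a\vee b)\wedge w)] - [h(a\wedge b) - h((a\wedge b)\wedge w)]$, and by the opening observation each bracket lies in $\{0,1\}$, so the drop is at most $1$, giving $d_{\min}(C') \geq d-1$. The main subtlety here is recognizing why distributivity is truly essential: without it one has only the weaker modular inequality $(a\wedge w)\vee(b\wedge w) \leq (a\vee b)\wedge w$, which can be strict (as the $M_3$ example illustrates), and this additional slack on the join side is precisely what forces the weaker bound $d-2$ in the general modular case.
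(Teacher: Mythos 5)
Your argument is correct, and it reaches the lemma by a route that is partly different from the paper's. The paper proves both parts by direct valuation algebra: it expands $d_h(a\wedge w, b\wedge w)$ via $h(x\vee y)+h(x\wedge y)=h(x)+h(y)$ and then bounds $h(a\vee b\vee w)\le n$ (resp.\ $h(a\vee w),h(b\vee w)\le n$), uses $h(w)=n-1$, and $h(a\wedge b\wedge w)\le h(a\wedge b)$. Your part (1) is essentially that same computation repackaged: the claim that each bracket $h(x)-h(x\wedge w)$ lies in $\{0,1\}$ is exactly the paper's pair of inequalities, derived from the covering observation $x\vee w\in\{w,I\}$ instead of from $h(a\vee b\vee w)\le n$. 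Your part (2), however, is genuinely different in mechanism: rather than the valuation expansion, you isolate the one-step estimate $d_h(x,x\wedge w)\le 1$ and conclude with two applications of the triangle inequality, which is legitimate because $d_h$ is a metric on a modular lattice by Theorem~\ref{thm_height_metric}. This is cleaner and more conceptual --- it makes transparent that the worst-case drop of two units comes from each endpoint moving by at most one unit under puncturing --- and your closing remark correctly identifies that distributivity is needed only to pin down the join, $(a\wedge w)\vee(b\wedge w)=(a\vee b)\wedge w$, whereas modularity alone gives only an inequality. One small point you should make explicit: the inference from $h(w)=h(I)-1$ to ``$I$ covers $w$'' uses the strict isotonicity (positivity) of the height function on a modular lattice (noted in the paper before Theorem~\ref{thm_height_metric}), or alternatively a direct chain-length argument; with that recorded, every step is sound.
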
 \begin{proof} 
Let $\tilde{a} := a \wedge w$, $\tilde{b} := b \wedge w$ and $\widetilde{a \vee b} := (a \vee b) \wedge w$.

Proof of 1): We assume $L$ is distributive. We have to show that \[d_h(\tilde{a} ,\tilde{b}) \geq d - 1 \] for all $a,b \in C$. By the definition of $d_h$, \begin{align*} d_h(\tilde{a} ,\tilde{b}) = h[\tilde{a} \vee \tilde{b}] - h[\tilde{a} \wedge \tilde{b}]. \end{align*} Since the lattice is distributive, we can write it as, \begin{align*} d_h(\tilde{a} ,\tilde{b}) = h[\widetilde{a \vee b}] - h[a \wedge b \wedge w]. \end{align*} The height function is a valuation and thus satisfies $h[x \vee y] + h[x \wedge y] = h[x] + h[y]$. We use this in the above equation to obtain, \begin{align*} d_h(\tilde{a} ,\tilde{b}) = h[a \vee b] + h[w] - h[a \vee b \vee w] - h[a \wedge b \wedge w]. \end{align*} Since $a \vee b \vee w \leq I$, it must be that $h[a \vee b \vee w] \leq h[I] = n$. Using this inequality and $h[w] = n-1$, we get \begin{align*} d_h(\tilde{a} ,\tilde{b}) \geq h[a \vee b] + (n-1) - n - h[a \wedge b \wedge w]. \end{align*} Clearly $a \wedge b \wedge w \leq a \wedge b$ and therefore $h[a \wedge b \wedge w] \leq h[a \wedge b]$. Using this, the definition of $d_h(a,b)$ and the fact that $d$ is the minimum distance of the scheme $C$, we finally get \begin{align*} d_h(\tilde{a} ,\tilde{b}) \geq d_h(a,b) - 1 \geq d - 1. \end{align*}

Proof of 2): We have to show that $d_h(a \wedge w ,b \wedge w) \geq d - 2$ for all $a,b \in C$. 
Again by the definition of $d_h$, 
\begin{align*} 
d_h(\tilde{a} ,\tilde{b}) = h[\tilde{a} \vee \tilde{b}] - h[\tilde{a} \wedge \tilde{b}].
\end{align*}
Repeatedly using the fact that the height function is a valuation and thus satisfies $h[x \vee y] + h[x \wedge y] = h[x] + h[y]$, we get, 
\begin{align*} d_h(\tilde{a} ,\tilde{b}) = h[a] + h[b] + 2h[w] - h[a \vee w] \\ - h[b \vee w] - 2h[a \wedge b \wedge w]. \end{align*} 
Clearly $a \wedge b \wedge w \leq a \wedge b$ and therefore $h[a \wedge b \wedge w] \leq h[a \wedge b]$. 
Additionally $\tilde{a}\leq I$ and $\tilde{b} \leq I$, which means $h[\tilde{a}], h[\tilde{b}] \leq h[I] = n$. Using this inequality and $h[w] = n-1$, we get the following:
\begin{align*} d_h(\tilde{a} ,\tilde{b}) \geq h[a] + h[b] + 2(n-1) - 2n - 2h[a \wedge b] \end{align*}
Using the definition of $d_h(a,b)$ and the fact that $d$ is the minimum distance of the scheme $C$, we finally get, 
\begin{align*} d_h(\tilde{a} ,\tilde{b}) \geq d_h(a,b) - 2 \geq d - 2. \end{align*} \end{proof}

The distance drop lemma states that in a non-distributive lattice, the drop in the minimum distance after puncturing a dimension, can be at most two units. So it would be interesting to know if it is possible that the drop of two units is exhibited by some scheme in a non-distributive lattice. The following example constructs such a scheme.

\begin{example} \label{example_proj_space} Let $V$ be a three dimensional space, over $\mathbb{F}_2$, spanned by $\{e_1,e_2,e_3\}$. Let $A_1 = $ $\langle\{e_1,e_2\}\rangle$, $A_2 = $ $\langle\{e_2,e_3 + e_1\}\rangle$ and $W = $ $\langle\{e_2,e_3\}\rangle$. We have $d_S(A_1,A_2) = 2$ but $d_S(W \cap A_1, W \cap A_2) = 0$. Note that $A_2 \cap (A_1 + W) \neq A_2 \cap A_1 + A_2 \cap W$ (that is, the sub-lattice generated by $A_1$, $A_2$ and $W$ is $M_3$) as expected. If our scheme contained $A_1, A_2$ and $W$, then puncturing the scheme with $W$ would have left $W$ as it is and punctured only the remaining two subspaces.
\end{example}

The above example clearly illustrates that the lack of distributivity in a lattice can drop the distance of a punctured scheme by two units. In fact, whenever a scheme has two elements, $a$ and $b$, which along with a third lattice element $c$ generates a sub-lattice isomorphic to $M_3$, we can puncture by a dimension so that the distance between $a$ and $b$ drop by two units. This is established in the Lemma \ref{lem_mod_drop_emthree}.

\begin{figure}
 \centering
 \includegraphics[scale=0.5, trim=90 100 50 10, clip=true]{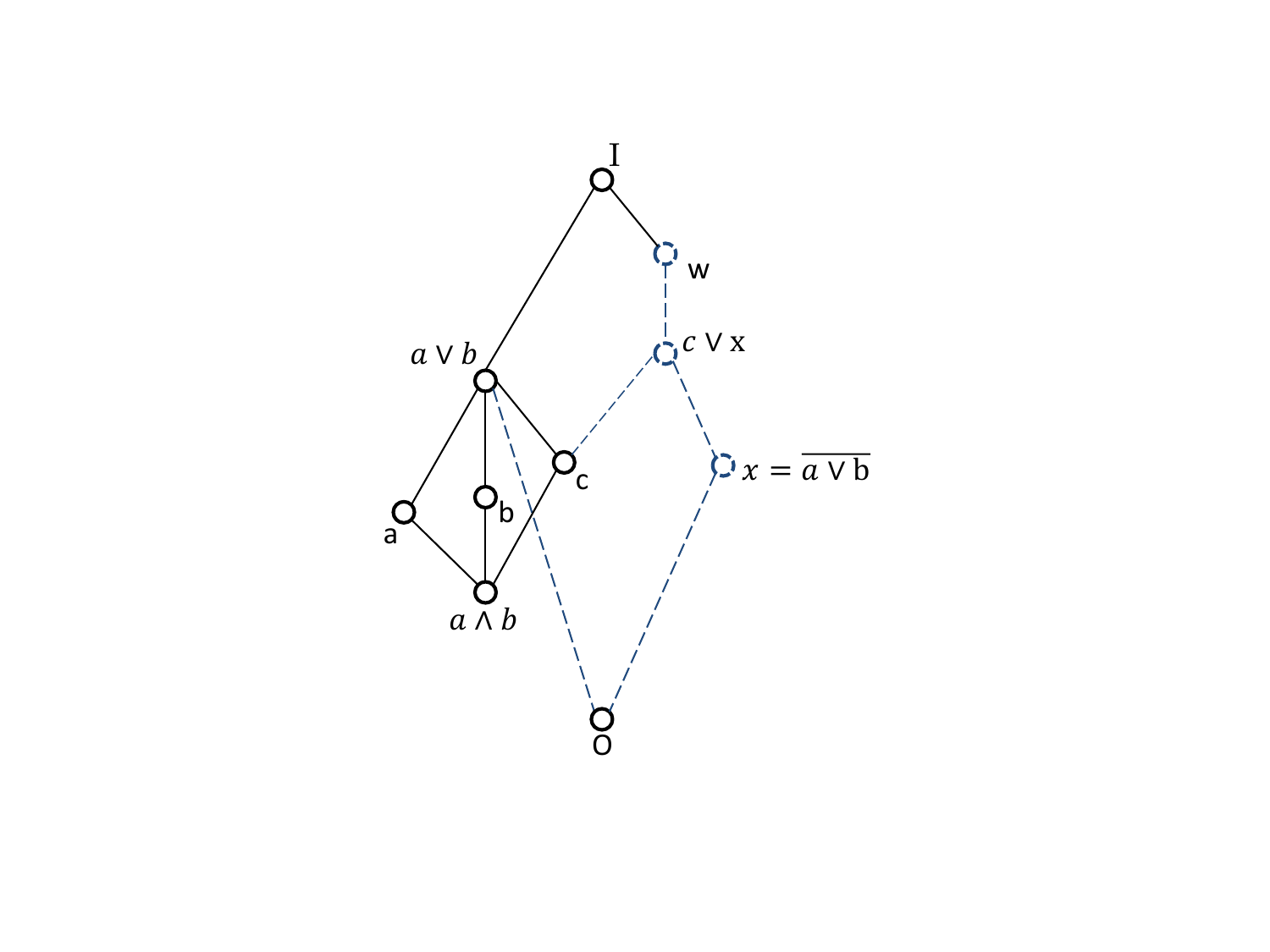}
 \caption{The black solid lines and bubbles denote the given data and the blue dashed lines and bubbles represent the elements constructed in the proof of Lemma \ref{lem_mod_drop_emthree}.}
 \label{fig_mthreestructproof}
\end{figure}

\begin{lemma}
\label{lem_mod_drop_emthree}
If the sub-lattice generated by $a,b,c \in L$ is isomorphic to $M_3$, then there exists a $w \in L$ with $h[w] = h[I]-1$ such that $d(a\wedge w,b \wedge w) = d(a,b) - 2$. 
\end{lemma}
\begin{proof}
We will refer to $|d(a,b) - d(a\wedge w,b \wedge w)|$ as `drop in distance'. From the proof of the second part of Theorem \ref{lem_drop_dist}, the drop in distance is two units if and only if all the inequalities in the that proof are equalities. That is, the drop in distance is two units if and only if \begin{align} \label{eqn_droptwo_1} a \vee w = b \vee w = I,\\ \label{eqn_droptwo_2} a \wedge b \leq w.\end{align} 

It is given that the sub-lattice generated by $\{a,b,c\}$ is isomorphic to $M_3$. In other words,
\begin{align}a \vee b = b \vee c = a \vee c,\\ a \wedge b = b \wedge c = a \wedge c.\end{align} 

By the `complementarity' property of geometric modular lattice $L$, there exists an $x \in L$ such that $(a \vee b) \wedge x = O$ and $(a \vee b) \vee x = I$. $x$ is called \emph{a complement} of $a \vee b$ (See \cite[pg.89]{Birkhoff}). Using the fact that $a \vee b = b \vee c$ and $c \leq b \vee c$, we get \begin{align} h[x] &= h[I] - h[a \vee b] \\ &= h[I] - h[b \vee c] \\ &< h[I] - h[c].\end{align}
But this means \begin{align}h[c \vee x] &\leq h[c] + h[x]\\ &< h[c] + (h[I] - h[c]) \\ &= h[I].\end{align} Therefore, since $L$ is a geometric lattice, we can find a $w$ such that $c \vee x \leq w$ and $h[w] = h[I] - 1$. If we pick this $w$ to puncture the lattice $L$ by a dimension, we will show that the drop in distance is two units by proving the relations (\ref{eqn_droptwo_1}) and (\ref{eqn_droptwo_2}). Fig.\ref{fig_mthreestructproof} shows a depiction of the complement $x$ of $a \vee b$ and the construction of $w$.

To establish \ref{eqn_droptwo_1}, we use a height calculation:
\begin{align*}
h[a \vee w] &= h[a \vee (c \vee x)] \\ &= h[(a \vee c) \vee x]\\ &= h[(a \vee b) \vee x]\\ &= h[I].
\end{align*}
Therefore we establish that $a \vee w = I$. In a similar manner, we can establish $b \vee w = I$. Since $a \wedge b \leq c$ and $c \leq c \vee x = w$, \ref{eqn_droptwo_2} is also proved. Hence the drop in distance is two units between $a$ and $b$.
\end{proof}

Next, we show that this happens in all non-distributive lattices since every non-distributive lattice has a $M_3$ sub-lattice.

\begin{theorem}
\label{thm_mod_drop}
There exists elements $a,b,w \in L$ with $h[w] = h[I]-1$ such that $d(a\wedge w,b \wedge w) = d(a,b) - 2$ if and only if $L$ is a non-distributive lattice. 
\end{theorem}
\begin{proof}
Again, we will refer to $|d(a,b) - d(a\wedge w,b \wedge w)|$ as `drop in distance'. From the first part of Theorem \ref{lem_drop_dist}, the drop in distance is at most one unit in a distributive lattice. This means that if the drop in distance is two units, the sub-lattice $M$ generated by $\{a,b,w\}$ is modular non-distributive and therefore $L$ is non-distributive. 

On the other hand, suppose $L$ is a non-distributive lattice. Then by Theorem \ref{thm_mthree} $L$ contains a sub-lattice $M$, isomorphic to $M_3$, generated by $\{a,b,c\}$ for some $a,b,c \in L$. Therefore, by Lemma \ref{lem_mod_drop_emthree}, there exists a $w \in L$ with $h[w] = h[I]-1$ such that $d(a\wedge w,b \wedge w) = d(a,b) - 2$. \end{proof}

We will now derive a Singleton bound for lattice schemes, that establishes an upper bound on the cardinality of the scheme, for a given minimum distance for geometric modular lattices.

\begin{theorem}[Lattice Singleton Bound(LSB)] \label{theorem_LSB} 
If $(L, d_h)$ is a geometric modular lattice with height $h$ and $h(I) = n$, $d_h$ is the metric induced by the height function, and $C$ is a scheme of $L$ with minimum distance $d$, then 
\begin{equation} |C| \leq |L'| \end{equation}
 where $L' = [0,w]$ for an element $w \in L$ and $h(w) = n - \alpha_L$. The constant $\alpha_L$ depends on the lattice as follows:
\begin{enumerate} 
\item $\alpha_L = d-1$, when $L$ is {\it distributive},
\item $\alpha_L = \lfloor {\frac{d-2}{2}} \rfloor$, when $L$ is {\it modular}.
\end{enumerate} 
\end{theorem}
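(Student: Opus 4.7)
The plan is to iterate the distance drop lemma until the minimum distance is forced down to a value that is still strictly positive, and then bound the surviving scheme by the size of the resulting sublattice. Concretely, I would produce a sequence of schemes $C=C_0,C_1,\dots,C_{\alpha_L}$ together with sublattices $L=L_0\supseteq L_1\supseteq\cdots\supseteq L_{\alpha_L}$, where at step $i$ I pick some $w_i\in L_{i-1}$ of height $n-i$, set $C_i:=\{w_i\wedge a\mid a\in C_{i-1}\}$, and set $L_i:=L_{i-1}\wedge w_i$. Each $L_i$ is a sublattice of $L$ (by the remark following the puncturing definition) of height exactly $n-i$, and each passage $C_{i-1}\mapsto C_i$ is exactly one dimension-puncture of $C_{i-1}$ inside $L_{i-1}$.

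Two preliminary points need to be checked so that this iteration is legal. First, at every stage an element of height $n-i$ must exist in $L_{i-1}$: since $L_{i-1}$ is modular of finite height, Theorem \ref{JD} gives it the Jordan--Dedekind property, so any maximal chain from $O$ to the top of $L_{i-1}$ passes through elements of every intermediate height. Second, the distance drop lemma must remain applicable at each step: this is automatic because the interval $L\wedge w_i$ inherits modularity (respectively distributivity) from $L$, and the height function on $L_i$ is the restriction of $h$, so $d_h$ is still the metric induced by the height function.

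Now I iterate Lemma \ref{lem_drop_dist}. In the distributive case the minimum distance drops by at most one per puncture, so after $\alpha_L=d-1$ steps
\begin{equation*}
d_{\min}(C_{\alpha_L})\ \geq\ d-(d-1)\ =\ 1.
\end{equation*}
In the merely modular case the drop is at most two per step, so with $\alpha_L=\lfloor(d-1)/2\rfloor$ one gets
\begin{equation*}
d_{\min}(C_{\alpha_L})\ \geq\ d-2\lfloor (d-1)/2\rfloor\ \geq\ 1.
\end{equation*}
In either case the final scheme still has positive minimum distance, which means the composite map $C\to C_{\alpha_L}$, $a\mapsto (w_{\alpha_L}\wedge\cdots\wedge w_1)\wedge a$, is injective on $C$; hence $|C|=|C_{\alpha_L}|\leq |L_{\alpha_L}|$. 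Finally, by the remark on Whitney numbers of punctured lattices, $|L_{\alpha_L}|=\sum_{k=0}^{n-\alpha_L}c(n-\alpha_L,k)$, which is exactly the claimed bound.

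The main obstacle, and the only place that requires real care, is the iteration itself: the distance drop lemma is stated for one puncture inside a lattice that is assumed distributive or modular, so the proof must explicitly verify that each successive sublattice $L_i$ retains the hypothesis needed for the next application (distributivity or modularity), that $h$ restricted to $L_i$ is still a valuation so that $d_h$ is still the relevant metric, and that the Whitney-number counts in $L_{\alpha_L}$ are independent of the particular choices of $w_1,\dots,w_{\alpha_L}$. All three points reduce to standard interval-closure facts for modular lattices together with the remark following the definition of puncturing, but they are what license the inductive argument and should be spelled out.
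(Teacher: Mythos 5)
Your proposal is correct and follows essentially the same route as the paper: iterate the distance drop lemma (Lemma \ref{lem_drop_dist}) $\alpha_L$ times, observe the minimum distance stays positive so the puncturing map is injective on $C$, and bound $|C|$ by the total number of elements of the punctured lattice via its Whitney numbers. The only difference is cosmetic — you fix $\alpha_L$ in advance and verify $d-\beta\alpha_L\geq 1$, and you spell out the inheritance of modularity/distributivity and of the height function by the sublattices $L_i$, details the paper leaves implicit.
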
 
\begin{proof} 
Given the lattice $L$, pick an element $w'_1$ of height $n-1$ (which always exists in a geometric modular lattice). Now puncture the lattice by a dimension to get a new geometric modular lattice $L'$ with height $n-1$ and suppose that drop in minimum distance of the scheme, after puncturing a dimension, is at most $\beta$. If $d'$ is the minimum distance of the punctured scheme $C'$, then $d' \geq d - \beta$. If $d - \beta > 0$, then all elements in the punctured scheme $C'$ are still distinct. We repeat the puncturing operating on the new lattice. Suppose $w'_i$ is used to puncture the at the '$i$'th step, then the height of $w'_i$ in the lattice is $n-i$. The puncturing is repeated maximum number of times so that the minimum distance of the punctured scheme does not drop to zero. In other words, we keep puncturing the lattice, until the minimum distance of the punctured scheme is just short of zero. Let us say that the lattice was punctured $\alpha _L$ times. Since the minimum distance of the punctured scheme is still non zero, the punctured scheme contains exactly the same number of elements as $C$. At this stage, the minimum distance of the scheme is non-zero and all the elements of the punctured scheme are in an interval of $[0,w]$ where $w$ is an element of height $n - \alpha _L$. Clearly the number of elements in the scheme $C$ is upper bounded by the total number of elements in the punctured lattice. Thus 
\begin{align*} |C| \leq |L'|. \end{align*}
If the minimum distance of the scheme, after being punctured $\alpha _L$ times, is $D$, then $D \geq d - \beta \alpha _L$ and $D$ is the smallest number such that $D > 0$. This implies $\alpha_L$ is the largest number such that $\beta \alpha _L < d$. From the Lemma \ref{lem_drop_dist}, we know that $\beta = 1$ for a distributive lattice and $\beta = 2$ for a modular lattice. Thus $\alpha _L = d - 1$ for a distributive lattice and for a modular lattice, $\alpha_L = \lfloor {\frac{d-2}{2}} \rfloor$. This completes the proof. \end{proof}

According to Theorem \ref{thm_mod_drop}, when puncturing by a dimension, the distance between elements of non-distributive lattice scheme will decrease by at most two units. However, if the distance between $a$ and $b$ in a scheme is the minimum distance of the scheme and the sub-lattice generated by $\{a,b,w\}$ is distributive, then the drop in the distance between $a$ and $b$ is not more than one unit. In the proof of Theorem \ref{theorem_LSB}, when we repeatedly puncture by a dimension, we use a drop of two units to obtain the bound. Therefore if there is a non-distributive lattice scheme such that the elements that are at a minimum distance after each puncture form a non-distributive sub-lattice with the puncturing element, then the Singleton bound would be tighter for non-distributive lattices. However, we have not been able to construct such schemes.

We will now apply Theorem \ref{theorem_LSB} to two important lattice schemes (namely the power set lattice and the projective lattice) and derive the corresponding LSB. The LSB that we obtain coincides with the Singleton bound found in the literature. First, we derive the classical Singleton Bound in as a corollary to Theorem \ref{theorem_LSB}. 

\begin{corollary} Let $C$ be a code in $(\mathbb{F}_2 ^n, d_H)$, with minimum distance $d$, then $\displaystyle |C| \leq 2^{n-d+1}. $ 

\end{corollary}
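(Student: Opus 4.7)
The plan is to deduce the classical Singleton bound as a direct application of the Lattice Singleton Bound (Theorem \ref{theorem_LSB}) to the power set lattice, using the equivalence between binary codes and power set schemes already established in Example \ref{exmpl_binary_codes}. First I would invoke Example \ref{exmpl_binary_codes} together with the second part of Proposition \ref{prop_dist_preserve}: the code $C \subseteq \mathbb{F}_2^n$ is equivalent (via the support map $\phi$) to a lattice scheme $\phi(C)$ in $L = (\textit{Pow}(\underline{n}), \cup, \cap)$, and this equivalence preserves both the cardinality and the minimum distance. So it suffices to upper bound $|\phi(C)|$ inside $L$.

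Next I would observe that $L$ is a geometric distributive lattice (noted just after the definition of geometric distributive lattices), so Theorem \ref{theorem_LSB} applies with $\alpha_L = d-1$. This gives
\begin{equation*}
|C| = |\phi(C)| \leq \sum_{k=0}^{n-d+1} c_L(n-d+1, k),
\end{equation*}
where by the remark on punctured lattices, the Whitney numbers are those of any $(n-d+1)$-height sublattice of the form $L \wedge w$. For the power set lattice, the sublattice obtained by puncturing with any $w$ of height $n-d+1$ is isomorphic to $\textit{Pow}(w)$, a power set on an $(n-d+1)$-element set.

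The key computation is then to identify the Whitney numbers explicitly: in $\textit{Pow}(Y)$ with $|Y| = m$, we have $c(m, k) = \binom{m}{k}$, the number of $k$-subsets. Substituting $m = n-d+1$ and using the binomial identity $\sum_{k=0}^{m} \binom{m}{k} = 2^m$ yields
\begin{equation*}
|C| \leq \sum_{k=0}^{n-d+1} \binom{n-d+1}{k} = 2^{n-d+1},
\end{equation*}
which is exactly the classical Singleton bound. No step here is an obstacle in the traditional sense; the only care needed is to note that the distributive case of Theorem \ref{theorem_LSB} (rather than the modular case with $\lfloor (d-1)/2 \rfloor$) is the one that produces the tight bound $n-d+1$, since $\textit{Pow}(\underline{n})$ is distributive.
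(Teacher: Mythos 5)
Your proposal is correct and follows essentially the same route as the paper: invoke the equivalence of binary codes with power set schemes from Example \ref{exmpl_binary_codes}, apply Theorem \ref{theorem_LSB} in the distributive case with $\alpha_L = d-1$, identify the Whitney numbers of the height-$(n-d+1)$ power set lattice as $\binom{n-d+1}{k}$, and sum to $2^{n-d+1}$. Your explicit identification of the punctured lattice as a power set on $n-d+1$ elements is in fact a slightly cleaner justification of the Whitney number computation than the paper's, but it is the same argument.
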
 \begin{proof} In Example \ref{exmpl_binary_codes}, $c_L(n-\alpha _L, k)$ is the number of subsets of size $k$ in the scheme $C$, given that $v(I) = n - \alpha_L$. Using the fact that $L$ is geometric and distributive, $\alpha_L = d-1$. First we observe that, \[ c_L(n-d+1, k) \leq |\{A \in L | |A| = k\}| \] and \[ |\{A \in L | |A| = k\}| = \binom{n-d+1}{k}. \] Applying the LSB theorem, we get, 

\[ |C| \leq \sum_{k=0}^{n-d+1} \binom{n-d+1}{k} = 2^{n-d+1} \] which is the Singleton bound for $(\mathbb{F}_2 ^n, d_H)$. \end{proof}

A new bound for non-constant dimension subspace codes can be derived by applying Theorem \ref{theorem_LSB}. This Singleton bound in projective spaces will be specified using Gaussian numbers (they are the $q$-analogues of binomials \cite{LintWils}). 

\begin{corollary}[Singleton Bound for projective spaces] Let $C$ be a code in $(\text{Sub}(V), d_S)$ (where $V$ is vector space over $\mathbb{F}_q$), with minimum distance $d$, then 

\[\displaystyle |C| \leq \sum_{k=0}^{n-\lfloor \frac{d-2}{2} \rfloor}{n-\lfloor \frac{d-1}{2} \rfloor \brack k}_q \] where ${n \brack k}_q$ denotes the Gaussian number. \end{corollary}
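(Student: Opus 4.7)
The plan is to obtain this as a direct specialization of the Lattice Singleton Bound (Theorem \ref{theorem_LSB}) to the projective lattice $L = (\text{Sub}(V), +, \cap)$. The first step is to check that this lattice fits the hypotheses of the theorem: it is a geometric modular lattice (as noted in Section \ref{sec_lat_prelims}), its height function is $h(A) = \dim A$, and the metric induced by $h$ is precisely the subspace distance $d_S$. Hence the code $C \subseteq \text{Sub}(V)$ is a lattice scheme in $(L, d_h)$ with the same minimum distance $d$.

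The second step is to identify the Whitney numbers of $L$. By definition, $c_L(n,k)$ counts the elements of height $k$ in $L$, which are exactly the $k$-dimensional subspaces of $V$; there are ${n \brack k}_q$ such subspaces. Moreover, puncturing by a dimension corresponds to choosing some $w \in L$ with $h(w) = n-1$, i.e.\ a hyperplane of $V$, and passing to the sublattice $L \wedge w = \text{Sub}(w)$, which is itself a projective lattice of height $n-1$. By the remark following the definition of puncturing, the Whitney numbers of the punctured lattice depend only on the height of $w$, so after $\alpha_L$ successive puncturings we end up in a projective lattice of height $n - \alpha_L$ whose Whitney numbers at level $k$ are ${n - \alpha_L \brack k}_q$.

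The final step is to plug in the correct value of $\alpha_L$. Because $L$ is modular but not distributive (as witnessed by the $M_3$ sublattice in Example 2), Theorem \ref{theorem_LSB} applies with $\alpha_L = \lfloor (d-1)/2 \rfloor$. Substituting this and the Whitney numbers into the bound
\[
|C| \leq \sum_{k=0}^{n - \alpha_L} c_L(n - \alpha_L, k)
\]
yields exactly
\[
|C| \leq \sum_{k=0}^{n - \lfloor (d-1)/2 \rfloor} {n - \lfloor (d-1)/2 \rfloor \brack k}_q,
\]
which is the claimed bound.

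There is essentially no obstacle here: all the analytical work was done in Theorem \ref{theorem_LSB} and the distance drop lemma. The only mild subtlety worth flagging explicitly is that, unlike the distributive case (which gave the sharper $\alpha_L = d-1$ leading to the classical Singleton bound), here we must settle for $\alpha_L = \lfloor (d-1)/2 \rfloor$ because the non-distributivity of $\text{Sub}(V)$ allows the distance to drop by two per puncturing step, as illustrated by Example \ref{example_proj_space}. This is why the bound for non-constant dimension projective codes is genuinely weaker than the naive analog of the classical Singleton bound.
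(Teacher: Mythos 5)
Your proposal is correct and follows essentially the same route as the paper: specialize Theorem \ref{theorem_LSB} to the projective lattice with $\alpha_L = \lfloor (d-1)/2 \rfloor$ (since $\text{Sub}(V)$ is geometric modular but not distributive) and identify the Whitney numbers of the height-$(n-\alpha_L)$ lattice with the Gaussian numbers ${n-\alpha_L \brack k}_q$. Your observation that the punctured lattice is itself a projective lattice, so the Whitney numbers are exactly (not merely bounded by) the Gaussian numbers, is a minor sharpening of the paper's wording but not a different argument.
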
 \begin{proof} The $L = (\text{Sub}(V),d_v)$ lattice with $v(A) = \dim(A)$ is a modular geometric lattice. We note that the Gaussian number ${N \brack K}_q$ is the total number of $K$-dimensional subspaces of an $N$-dimensional space over $\mathbb{F}_q$ and thus by Theorem \ref{theorem_LSB},

\[ |C| \leq \sum_{k=0}^{k=n-\alpha_L} {n- \alpha_L \brack k}_q \] where $\alpha_L = \lfloor \frac{d-2}{2} \rfloor$. \end{proof}  

\begin{figure*}[t]
 \centering
 \includegraphics[scale=0.5, trim=20 75 40 10, clip=true]{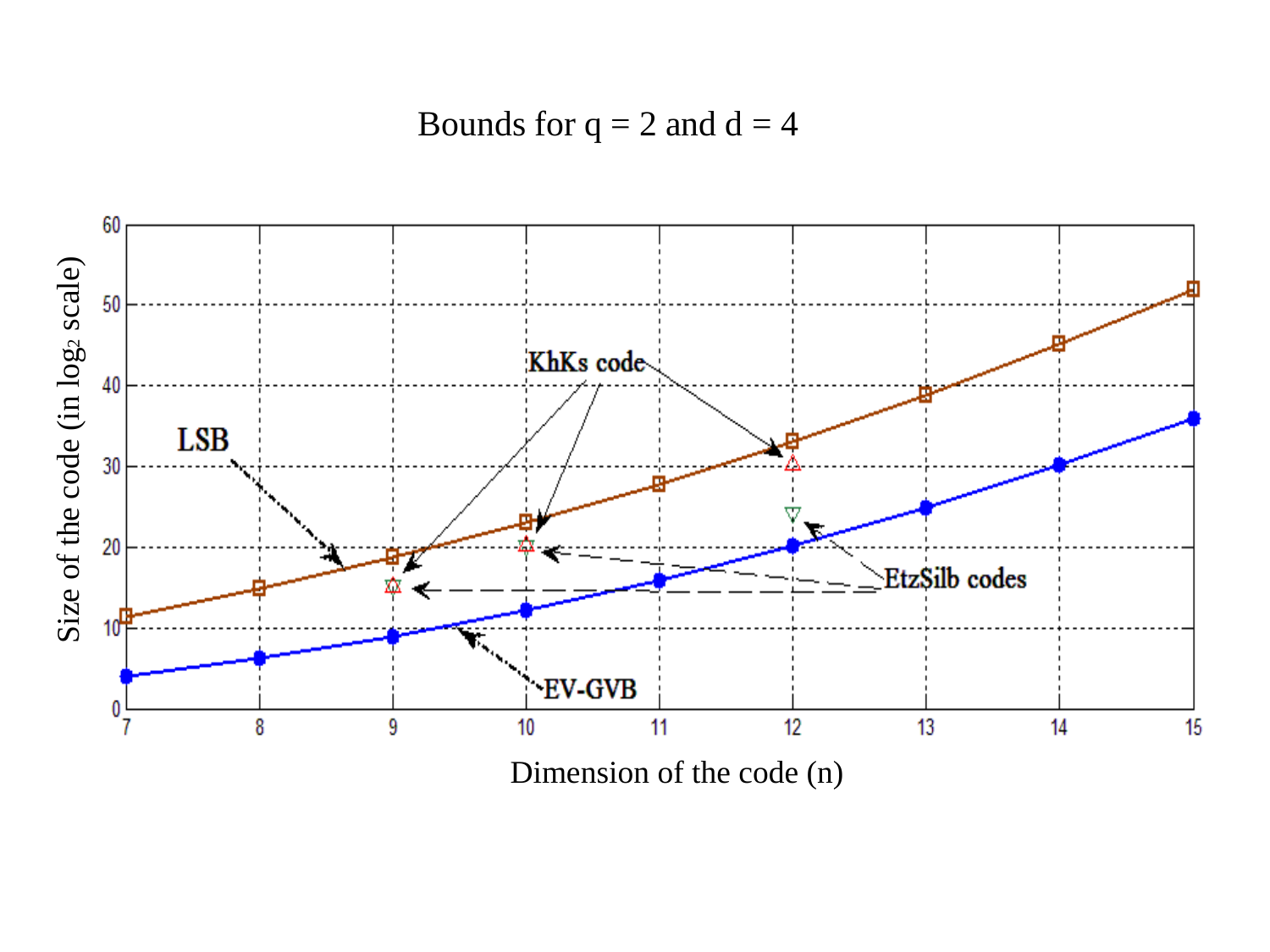}
 \caption{A plot that compares code sizes of different non-constant dimension codes and a lower bound on the code size. Note that the KhKs code sizes are within three bits from our upper bound. The subspaces are assumed to be over $\mathbb{F}_2$ and the minimum distance is fixed at $4$.}
 \label{fig_drawing}
\end{figure*}

Although this is a new upper bound, the tightness of the bound is not apparent. Therefore we investigate the behavior of our bounds, at such ranges, by plotting it with other bounds in the literature. We will compare our bound (LSB) to the Gilbert Varshamov bound (EV-GVB) proposed in \cite{EtzVar} for various values of the minimum distance. Further, we will plot points achieved by various codes in the literature.

The plot is shown in Fig. \ref{fig_drawing}. The minimum distance of the projective code and the finite field size has been fixed at $4$ and $2$ respectively, throughout this section. The plot clearly shows our upper bound above the lower bound EV-GVB. The points marked 'EtzSilb codes' and 'KhKs code' are the code parameters reported in \cite{EtzSilb2009} and \cite{KhalKschi} respectively, for $q=2$ and $d=4$. 

Fig. \ref{fig_drawing} shows that the EtzSilb codes and KhKs codes are close to optimal for $q=2$ and $d=4$. The KhKs code sizes are roughly $3$ bits away from the upper bound.

\section{Conclusion}
We introduced the notion of lattice schemes which serve as analogues of codes. We showed that binary codes and projective codes are special cases of lattice schemes. We derived a general notion of Singleton bound for lattices from which the classical Singleton bound for binary codes was derived as a corollary. We have proved that in any non-distributive modular lattices, a distance drop of two can be achieved by choosing an appropriate puncturing element. We proved that puncturing a dimension gives a tight bound for distributive lattices but not for projective lattices. The upper bound for non-constant dimension projective codes is also obtained. It is demonstrated that this bound is tighter when the minimum distance is much smaller than the dimension of the ambient space of the code.

It is not known whether there are lattice schemes that achieve the LSB for any lattice. It is not clear whether the Singleton bounds for rank metric codes, non-binary codes and quantum codes can be included in this framework. It would be useful to investigate interesting lattice schemes other than projective codes and binary codes. In \cite{EtzSilb2009}, Ferrers diagrams are used to construct projective codes and a bound similar to Singleton bound for rank-metric codes is also derived. Ferrers diagrams form a non-geometric distributive lattice. Therefore generalization of the LSB for non-geometric lattices is also an interesting direction for further research. It would be interesting if the bound presented in \cite{EtzSilb2009} can be presented from the point of view of lattices.

\end{document}